\newtheorem{prop}{Proposition}[section]
\newtheorem{lema}{Lemma}[section]
\title{Bayesian Wavelet Shrinkage with Beta Priors}
\author{Alex Rodrigo dos S. Sousa \\ USP, Brazil  \\ \and Nancy L. Garcia \\ Unicamp, Brazil \\ \and Brani Vidakovic \\ TAMU, USA }
\begin{document}

\numberwithin{equation}{section}
\numberwithin{table}{section}
\numberwithin{figure}{section}

 \maketitle
    \begin{abstract}
In wavelet shrinkage and thresholding, most of the standard techniques do not consider information that wavelet coefficients might be bounded, although information about bounded energy in signals can be readily available. To address this, we present a Bayesian approach for shrinkage of bounded wavelet coefficients in the context of non-parametric regression.  We propose the use of a zero-contaminated beta distribution with a support symmetric around zero as the prior distribution for the location parameter in the wavelet domain in models with additive gaussian errors. The hyperparameters of the proposed model are closely related to the shrinkage level, which facilitates their elicitation and interpretation. For signals with a low signal-to-noise ratio, the associated Bayesian shrinkage rules provide significant improvement in performance in simulation studies when compared with standard techniques.

Statistical properties such as bias, variance, classical and Bayesian risks of the associated shrinkage rules are presented and their performance is assessed in simulations studies involving standard test functions. Application to real neurological data set on spike sorting is also presented.
    \end{abstract}

\section{Introduction}

Wavelet-based methods are increasingly applied in many fields, such as mathematics, signal and image processing, geophysics, bioinformatics, and many others. In statistics, applications of wavelets arise mainly in the tasks involving non-parametric regression, density estimation, assessment of scaling, functional data analysis and stochastic processes. These methods basically utilize the possibility of representing functions that belong to certain functional spaces as expansions in a wavelet basis, similar to others expansions such as splines or Fourier, among others. However, the wavelet expansions have characteristics that make them particularly useful: they are localized in both time and scale/frequency in an adaptive way, their coefficients are typically sparse, the coefficients can be obtained by fast
computational algorithms, and the magnitudes of the coefficients can be linked to the smoothness properties of the functions they represent. These properties of wavelet representations enable adaptive time/frequency data analysis, bring computational advantages, and allow for statistical data modeling at varying resolution scales.

Wavelet shrinkage methods are used to estimate the underlying signal when its noisy version is observed. The noisy signal is transformed to a wavelet domain, the resulting wavelet coefficients are shrunk, and the inverse transform of the shrunk coefficients is taken as an estimator of the original signal. The wavelet shrinkage is already a mature research field, many techniques available in the literature. The main works in this area are of Donoho (1993a, 1993b), Donoho and Johnstone (1994a, 1994b, 1995), but also Donoho et al. (1995, 1996), Johnstone and Silverman (1997), Vidakovic (1998, 1999b) and Antoniadis et al. (2002) can be cited. For more details on shrinkage methods, see Vidakovic (1999a) and Jansen (2001). The statistical models in which shrinkage techniques are applied standardly assume Gaussian additive errors. These models are important not only because of their applicability to a range of different problems, but also from the mathematical point of view since the Gaussian additive errors remain both Gaussian and additive after the data are wavelet-transformed.

Bayesian shrinkage methods have also been extensively studied, mainly for the possibility of adding, by means of a prior probabilistic distributions,  available information about the regression, coefficients and model parameters to be estimated. Specifically in the case of wavelets, information about the degree of sparsity of the coefficients, the support of these coefficients, the signal smoothness, its self-similarity, and monotonicity, to list a few, can be incorporated into the statistical model by a Bayesian approach. To achieve this, the choice of the prior distribution in the model describing wavelets coefficients is critically important to achieve meaningful results.

Many Bayesian shrinkage procedures have been studied and proposed recently in many statistical fields. Some examples are Lian (2011), Berger et al. (2012), Karagiannis et al. (2015), Griffin and Brown (2017), and Torkamani and Sadeghzadeh (2017). Bayesian models in the wavelet domain were proposed since the late 1990s, see Chipman et all (1997), Abramovich et al. (1998), Vidakovic (1998), Vidakovic and Ruggeri (2001), Angelini and Vidakovic (2004), Johnstone and Silverman (2005), Reményi and Vidakovic (2015), Bhattacharya et al. (2015) among others.
Bayesian models in the wavelet domain have showed to be capable of
incorporating prior information about the unknown regression function such as
smoothness, periodicity, sparseness, self-similarity and, for some
particular bases (e.g., Haar),  monotonicity.

Although classical and Bayesian shrinkage/thresholding procedures available in the literature are well suited for many applications and real-data denoising tasks, typically, they do not incorporate information on the bounded energy in signals,
which is manifested as boundedness of wavelet coefficients in the wavelet domain.  Angelini and Vidakovic (2004) proposed only uniform and Bickel distributions as bounded priors in models on wavelet coefficients to incorporate information about bounded energy in signals. In this paper we propose a more general family of bounded priors useful to model bounded energy signals. As we will show later, taking boundedness of the coefficients into account, which is readily modeled by a Bayesian approach, would improve estimation of signals especially when the signal-to-noise ratio is low. Since wavelet coefficients are well localized, this improved estimation can better recover specific features of the unknown function in nonparametric regression model, such as peaks, cusps, or discontinuities.

 The prior information on the energy bound
often exists in real life problems. In the wavelet domain this can be modeled by the assumption that the location parameter in a model for a wavelet coefficient is bounded. Estimation of a bounded normal mean has
been considered in Miyasawa (1953),  Bickel (1981), Casella and Strawderman (1981),
and Vidakovic and DasGupta (1996).  In
our context, if the structure of the prior can be
supported by the analysis of the empirical distribution of the
wavelet coefficients. When prior knowledge about the signal-to-noise ratio (SNR)  is available,
then any symmetric and unimodal distribution supported on a
bounded set, say $[-m, m]$, could be a possible candidate for
the prior.

To address denoising of bounded energy signals, in this paper we propose and explore the beta family of distributions symmetric around zero as priors for the location parameter in a Gaussian model on wavelet coefficients. As traditionally done in this kind of analysis, the prior is contaminated by a point mass at 0. This added point mass at zero to a spread part of the prior facilitates the shrinkage and makes it adaptive.

Several reasons motivate the use of beta family. The flexibility of the beta distribution, as a spread part of prior, is readily controlled by convenient choice of its parameters. Moreover, we show that there is an interesting relationship between the (hyper) parameters and the degree of wavelet shrinkage, which is critical for denoising tasks. If the problem is rescaled so that the size of the noise (its variance) is 1, then $m$ can be taken as SNR. In this context, beta distribution is a proper choice due its boundedness and  flexibility, which is a provides advantage compared to the already proposed uniform and Bickel priors. 
Furthermore, the performance of the shrinkage rules under beta family was found to be superior to some of the traditionally used classical and Bayesian shrinkage/thresholding methods applied in practice. The considered scenarios in the simulation studies involve test functions with different features to be recovered, such as spikes,   discontinuities, and oscillations, which guarantees some independence of the characteristics of the signal to be estimated.
Finally, the performance of the proposed beta shrinkage for low SNR in simulated datasets and in real weak-signal datasets shows an advantage with respect to mean square error, when compared to some commonly employed methods.

In summary, the novelty of this paper in terms of methodology is the use and assessment of rescaled and zero-mass contaminated distributions from beta family as priors on wavelet coefficients, allowing for boundedness information in shrinkage procedure. Moreover, the beta distribution generalizes the already proposed uniform prior model and is related to Bickel prior (but with hyperparameters much easier to elicit and interpret). As an extension, we also propose a triangular prior model, as a convolution of two uniform priors and demonstrate its good performance in wavelet shrinkage tasks. 

The proposed wavelet shrinkage as a computationally straightforward task, is serving as a building block of more advanced computational methods, involving adaptive smoothing of noisy phenomena and dealing with potentially large data sets.  In this sense, the contribution of this paper  can be viewed as a potential building brick for tasks in computational statistical learning.

This paper is organized as follows: Section 2 defines the considered model in time and wavelet domain and the proposed prior distribution, Section 3 presents the shrinkage rule and its statistical properties such as variance, bias and risks. As an extension of the beta prior, we consider the shrinkage rule under triangular prior in Section 4. Section 5 is dedicated to prior elicitation. To demonstrate the performance of the proposed approach simulation studies are presented in Section 6,
 and the shrinkage rule is applied in a spike-sorting real data set in Section 7. Section 8 provides conclusions.

\subsection{A Motivational Example}
As a motivational example we show the application and advantages of the proposed method. As signal, we consider Donoho and Johnstone's test function Doppler defined by $f(x) = \sqrt{x(1-x)}\sin\left(\frac{2.1\pi}{x+0.05}\right)$, $0 \leq x \leq 1$. Note that Doppler function is a bounded energy signal. Furthermore, it has oscillations as an interesting feature to be preserved in the task of denoising.

We evaluated the Doppler function at 512 equispaced points and added a random normal noise with signal-to-noise ratio SNR = 3, as shown in Figure \ref{fig:killer}(a). The goal is to recover the true signal by applying a wavelet shrinkage.

Figure \ref{fig:killer}(b) shows the signal (Doppler function) and the recovered signal (estimated function) after applying the shrinkage rule on empirical wavelet coefficients under the proposed mixture of a point mass function at zero and beta distribution as a prior. In fact, it will be shown in Section 6 (Simulation Studies) that this shrinkage rule has the best performance in terms of averaged mean squared error when compared to standard techniques. Moreover, this shrinker incorporates the prior knowledge about boundedness of the signal, which is not the case for the standard techniques. In addition, the shrinkage rule is simple.

\begin{figure}[H]
\centering
\subfigure[Generated data.]{
\includegraphics[scale=0.45]{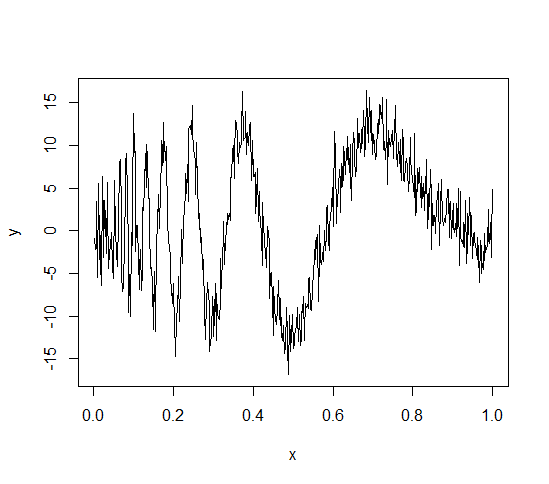}}
\subfigure[True and estimated signals.]{
\includegraphics[scale=0.45]{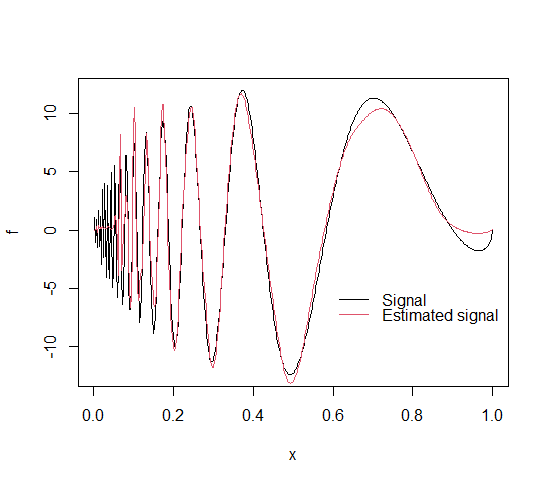}}
\caption{(a) Noisy Doppler of sizr 512 and low signal-to-noise ratio (SNR=3). (b) Estimated Doppler signal by shrinkage rule defined by the mixture of a point mass at zero and rescaled beta distribution as a prior. The origiunal sigmal is also shown for reference.} \label{fig:killer}
\end{figure}

\section{The Model}
\subsection{The Symmetric around Zero Beta Distribution}
In statistics, the beta family of distributions is extensively used to model random phenomena confined to the $[0,1]$ domain. The shape of beta distribution is very flexible; it is controlled by convenient choice of its parameters. In our framework, it is necessary to use its transformed version.
Because of symmetry of the support, the distribution is shifted and rescaled to the interval $[-m,m]$.  We also want to choose its parameters to keep it symmetric about 0. This requires both parameters of beta to have the same value. Therefore, we propose the use of beta distribution with support symmetric around zero as the spread-part for the prior distribution in the Bayesian model on the location of wavelet coefficients. Its density function is
\begin{equation}\label{eq:beta}
g(x;a,m) = \frac{(m^2 - x^2)^{(a-1)}}{(2m)^{(2a-1)}B(a,a)}\mathbb{I}_{[-m,m]}(x),
\end{equation}

\noindent where $B(\cdot , \cdot)$ is the standard beta function, $a>0$ and $m>0$ are parameters, and $\mathbb{I}_{[-m,m]}(\cdot)$ is an indicator function equal to 1 when its argument falls in the interval $[-m,m]$ and 0 otherwise.

 For $a>1$, the density function (\ref{eq:beta}) is unimodal around zero and as $a$ increases, the density becomes more concentrated around zero. This is an important feature for wavelet shrinkage methods, since high values of $a$ imply higher levels of shrinkage, which results in sparse estimated coefficients. Density \eqref{eq:beta} becomes uniform for $a=1$, which was already considered by Angelini and Vidakovic (2004). In this work we consider beta densities with $a \geq 1$.  Figure \ref{fig:beta} shows the translated and rescaled beta density for some selected values of parameter $a$ in the interval $[1,10]$ and $m=3$.

\begin{figure}[H]
\centering
\includegraphics[scale=0.50]{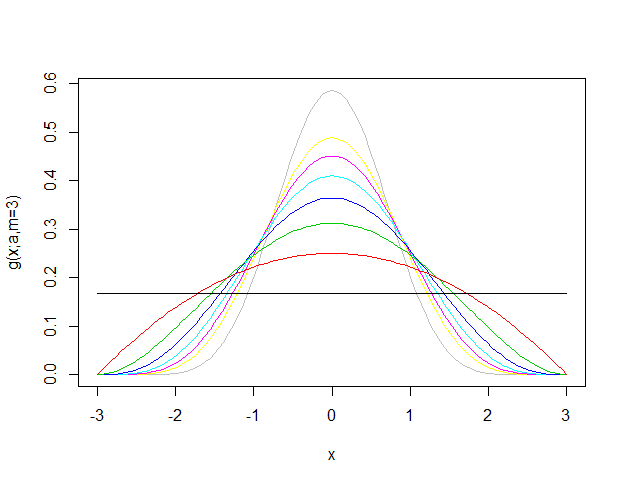}
\caption{Shifted and rescaled beta densities for some values of $a \in [1,10]$ and $m=3$.}\label{fig:beta}
\end{figure}
\subsection{Zero-contaminated Beta Distribution as a Prior}

We start with the nonparametric regression problem of the form
\begin{equation} \label{eq:modeltime}
y_i = f(x_i) + e_i , \qquad i=1,...,n=2^J, J \in \mathbb{N},
\end{equation}
\noindent where $f \in \mathbb{L}_2(\mathbb{R})= \{f:\int f^2 < \infty\}$ and $e_i$, $i=1,...,n$, are zero mean independent normal random variables with unknown variance $\sigma^2$. In vector notation, we have
\begin{equation}\label{eq:modeltimevec}
\boldsymbol{y} = \boldsymbol{f} + \boldsymbol{e},
\end{equation}

\noindent where $\boldsymbol{y} = (y_1,...,y_n)'$, $\boldsymbol{f} = (f(x_1),...,f(x_n))'$ and $\boldsymbol{e} = (e_1,...,e_n)'$. The goal is to estimate the unknown function $f$. After applying a discrete wavelet transform (DWT) on (2.3), given by the corresponding orthogonal matrix $D$, in the wavelet domain we obtain the following model,
\begin{equation} \label{eq:modelvec}
\boldsymbol{d} = \boldsymbol{\theta} + \boldsymbol{\epsilon},
\end{equation}
where
$\boldsymbol{d} = D\boldsymbol{y}$, $\boldsymbol{\theta} = D \boldsymbol{f}$ and $\boldsymbol{\epsilon} = D \boldsymbol{e}$.

Due to the independence of the random errors and the orthogonality of the $D$ transform, the model in the wavelet domain does not change its statistical structure. It remains additive and the errors are i.i.d. normal.

Because  the strong decorrelating property of wavelet transforms we can model one coefficient at a time. For the $i$th component of the vector $\boldsymbol{d}$, we have a simple model
\begin{equation}\label{eq:model}
d_i = \theta_i + \epsilon_i,
\end{equation}
\noindent where $d_i$ is the empirical wavelet coefficient, $\theta_i \in [-m,m]$ is the coefficient to be estimated and $\epsilon_i \sim N(0,\sigma^2)$ is the normal random error with unknown variance $\sigma^2$. For the simplicity of notation, we suppress the indices in $d$, $\theta$ and $\epsilon$. Note that, according to the model \eqref{eq:model}, $d|\theta \sim N(\theta,\sigma^2)$ and the problem of estimating a function $f$ becomes a normal mean estimation problem in the wavelet domain for each coefficient. Once this bounded mean estimation problem is solved, the vector $\boldsymbol{f}$ can be estimated by the application of the inverse wavelet transform on $\boldsymbol{\hat{\theta}}$.

To complete the Bayesian model, we propose the following prior distribution for $\theta$,
\begin{equation} \label{eq:prior}
\pi(\theta;\alpha,a,m) = \alpha \delta_{0}(\theta) + (1-\alpha)g(\theta;a,m),
\end{equation}
where $\alpha \in (0,1)$, $\delta_{0}(\theta)$ is the point mass function at zero and $g(\theta;a,m)$ is the beta distribution \eqref{eq:beta} in $[-m,m]$. The proposed prior distribution has $\alpha \in (0,1)$, $a>0$ and $m>0$ as hyperparameters and their choices are directly related to the degree of shrinkage of the empirical coefficients. It will be shown that as $a$ or $\alpha$ (or both of them) increase, the level of shrinkage increases as well.

\section{The Shrinkage Rule and its Statistical Properties}

The shrinkage rule $\delta(\cdot)$ for Bayesian estimation of the wavelet coefficient $\theta$ of model \eqref{eq:model} depends on the choice of location of the posterior (mean, mode, or median) and the loss function. Under square error loss function $L(\delta,\theta) = (\delta - \theta)^2$, it is well known that the Bayes rule is the posterior expected value of $\theta$, i.e, $\delta(d) = \mathbb{E}_{\pi}(\theta \mid d)$ minimizes the Bayes risk. The Proposition \ref{prop1} gives an expression of the shrinkage rule under a mixture  prior consisting of  a point mass at zero and a density function with support in $[-m,m]$.

\begin{prop} \label{prop1}
If the prior distribution of $\theta$ is of the form $\pi(\theta;\alpha,m) = \alpha \delta_{0}(\theta) + (1-\alpha)g(\theta)$, where $g$ is a density function with support in $[-m,m]$, then the shrinkage rule under the quadratic loss function is given by
\begin{equation}
\delta(d) = \frac{(1-\alpha)\int_{\frac{-m-d}{\sigma}}^{\frac{m-d}{\sigma}}(\sigma u + d)g(\sigma u + d)\phi(u)du}{\alpha \frac{1}{\sigma}\phi(\frac{d}{\sigma})+(1-\alpha)\int_{\frac{-m-d}{\sigma}}^{\frac{m-d}{\sigma}}g(\sigma u + d)\phi(u)du}
\end{equation}
where $\phi(\cdot)$ is the standard normal density function.
\end{prop}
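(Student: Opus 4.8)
The plan is to compute the posterior mean $E_\pi(\theta \mid d)$ directly from Bayes' theorem. First I would write the marginal (unconditional) density of $d$ as a mixture. Since $d \mid \theta \sim N(\theta,\sigma^2)$, the likelihood is $\frac{1}{\sigma}\phi\!\left(\frac{d-\theta}{\sigma}\right)$, where $\phi$ is the standard normal density. Integrating against the prior $\pi(\theta;\alpha,m) = \alpha\delta_0(\theta) + (1-\alpha)g(\theta)$ gives
\begin{equation*}
m(d) = \alpha\,\frac{1}{\sigma}\phi\!\left(\frac{d}{\sigma}\right) + (1-\alpha)\int_{-m}^{m} \frac{1}{\sigma}\phi\!\left(\frac{d-\theta}{\sigma}\right) g(\theta)\,d\theta,
\end{equation*}
the two terms coming from the atom at $0$ and from the absolutely continuous part, respectively. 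The posterior is then $\pi(\theta \mid d) = \frac{1}{m(d)}\left[\alpha\delta_0(\theta)\frac{1}{\sigma}\phi(\frac{d}{\sigma}) + (1-\alpha)\frac{1}{\sigma}\phi(\frac{d-\theta}{\sigma})g(\theta)\right]$, a mixture of a point mass at $0$ and a density supported on $[-m,m]$.

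Next I would compute $\delta(d) = E_\pi(\theta \mid d) = \int \theta\,\pi(\theta\mid d)\,d\theta$. The atom at $0$ contributes nothing to the integral since it is multiplied by $\theta=0$, so only the continuous part survives:
\begin{equation*}
\delta(d) = \frac{(1-\alpha)\int_{-m}^{m} \theta\,\frac{1}{\sigma}\phi\!\left(\frac{d-\theta}{\sigma}\right) g(\theta)\,d\theta}{\alpha\,\frac{1}{\sigma}\phi\!\left(\frac{d}{\sigma}\right) + (1-\alpha)\int_{-m}^{m} \frac{1}{\sigma}\phi\!\left(\frac{d-\theta}{\sigma}\right) g(\theta)\,d\theta}.
\end{equation*}
Finally, I would apply the change of variables $u = \frac{\theta - d}{\sigma}$, so that $\theta = \sigma u + d$, $d\theta = \sigma\,du$, and $\frac{1}{\sigma}\phi\!\left(\frac{d-\theta}{\sigma}\right) = \frac{1}{\sigma}\phi(-u) = \frac{1}{\sigma}\phi(u)$ by symmetry of $\phi$. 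The limits $\theta = \pm m$ transform to $u = \frac{\pm m - d}{\sigma}$. The factor $\frac{1}{\sigma}$ from the Jacobian cancels against the $\frac{1}{\sigma}$ in the likelihood in both numerator and denominator integrals, yielding exactly the stated expression with integrand $(\sigma u + d)g(\sigma u + d)\phi(u)$ in the numerator and $g(\sigma u + d)\phi(u)$ in the denominator.

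There is no real obstacle here: the argument is a routine application of Bayes' theorem for a prior with an atom, plus a linear substitution. The only point deserving a word of care is the handling of the Dirac mass — one should note that $\delta_0$ integrates the likelihood at $\theta=0$ to produce the $\alpha\frac{1}{\sigma}\phi(d/\sigma)$ term in the normalizing constant, while contributing zero to the numerator of the posterior mean because of the factor $\theta$. Everything else (finiteness of the integrals, legitimacy of interchanging integration and normalization) is immediate because $g$ is a bounded-support density and $\phi$ is integrable.
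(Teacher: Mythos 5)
Your proposal is correct and follows essentially the same route as the paper's proof: write the posterior mean as a ratio of integrals against the mixture prior, observe that the point mass at zero contributes only to the normalizing constant (the $\alpha\frac{1}{\sigma}\phi(d/\sigma)$ term) and nothing to the numerator, and then apply the substitution $u=(\theta-d)/\sigma$ to arrive at the stated form. No issues.
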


\begin{proof}
If $\mathcal{L}(\cdot \mid \theta)$ is the likelihood function, we have that
\begin{align*}
\delta(d) &= \mathbb{E}_{\pi}(\theta \mid d) \\
          &=\frac{\int_{\Theta}\theta[\alpha\delta_{0}(\theta)+(1-\alpha)g(\theta)]\mathcal{L}(d \mid \theta)d\theta}{\int_{\Theta}[\alpha\delta_{0}(\theta)+(1-\alpha)g(\theta)]\mathcal{L}(d \mid \theta)d\theta} \\
          &= \frac{(1-\alpha)\int_{-m}^{m}\theta g(\theta)\frac{1}{\sqrt{2\pi}}\exp\{-\frac{1}{2}(\frac{d-\theta}{\sigma})^2\}\frac{d\theta}{\sigma}}{\alpha \frac{1}{\sigma\sqrt{2\pi}}\exp\{-\frac{1}{2}(\frac{d}{\sigma})^2\}+(1-\alpha)\int_{-m}^{m}g(\theta)\frac{1}{\sqrt{2\pi}}\exp\{-\frac{1}{2}(\frac{d-\theta}{\sigma})^2\}\frac{d\theta}{\sigma}}\\
          &= \frac{(1-\alpha)\int_{\frac{-m-d}{\sigma}}^{\frac{m-d}{\sigma}}(\sigma u + d)g(\sigma u + d)\phi(u)du}{\alpha \frac{1}{\sigma}\phi(\frac{d}{\sigma})+(1-\alpha)\int_{\frac{-m-d}{\sigma}}^{\frac{m-d}{\sigma}}g(\sigma u + d)\phi(u)du}.\\
\end{align*}
\end{proof}

Although Proposition 3.1 gives us a general expression for shrinkage rule under a prior model of the form
\eqref{eq:prior}, an explicit formula for the shrinker when beta distribution is considered as $g$ is not available, in general. For this reason, we obtain the shrinkage rules and its properties, such bias and risks numerically, using computational methods based on standard Monte Carlo techniques for calculating the involved integrals in the expression of Proposition 3.1. The R codes of the shrinkage rules and their properties are available in our R package \textit{bayesShrink}, which is under development but already available at \textit{Github} repository, see Sousa et al. (2020).

Figure \ref{fig:shrink} presents some shrinkage rules for $g$ as beta distribution \eqref{eq:beta} with hyperparameters $m = 3$, $\alpha = 0.9$ and for some values of $a \in [1,10]$ as well as their variances. It can be seen that the length of the interval in which the rule shrinks to zero increases as the hyperparameter $a$ increases, since high values of $a$ result in more concentrated beta distributions around zero. A typical feature of these rules is that as $d$ increases, $\delta(d)$ gets closer to $m$ and as $d$ decreases, $\delta(d)$ gets closer to $-m$. These asymptotic characteristics are reasonable since there is the assumption that the coefficients to be estimated belong to the range $[-m, m]$, so the empirical coefficients outside this range may occur only due to the presence of noise.

\begin{figure}[H]
\centering
\subfigure[Shrinkage rules\label{lognormal}]{
\includegraphics[scale=0.45]{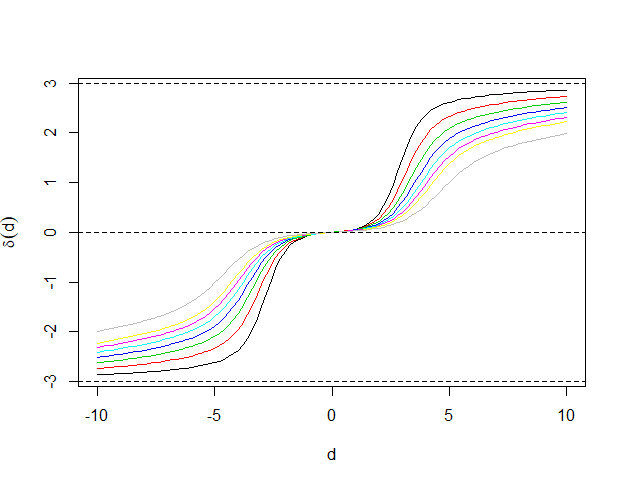}}
\subfigure[Variances\label{blocls}]{
\includegraphics[scale=0.45]{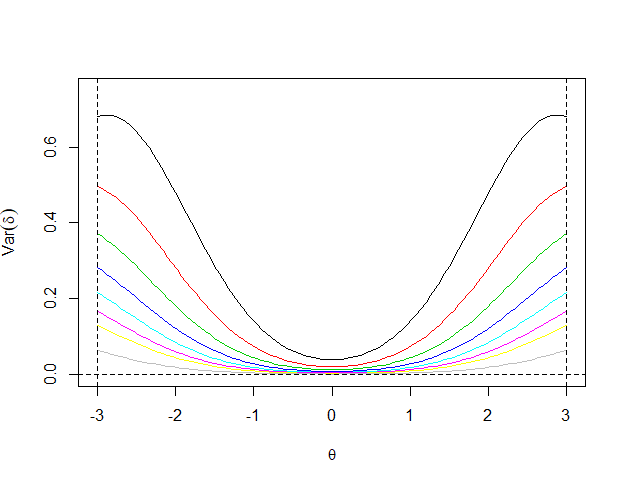}}
\caption{Shrinkage rules and their variances under beta prior distribution with hyperparameters $m=3$, $\alpha = 0.9$ and values of $a \in [1,10]$.} \label{fig:shrink}
\end{figure}

Figure \ref{fig:bias} (a) and (b) shows the squared bias and classical risks (denoted by $R(\theta)$) respectively for the same shrinkage rules considered above. Observe that, as expected, the rules have smaller variances and biases for values of $\theta$ near zero, reaching minimum values in both graphs when the wavelet coefficient is zero. It is also noted that as hyperparameter $a$ increases, the bias of the estimator increases and the variance decreases. The classical risk decreases as $\theta$ tends to zero and that for high values of $\theta$, the risk is larger for rules with large values of $a$. These features are justified by the fact that the degree of shrinkage increases as the hyperparameter $a$ increases, so if the value of the wavelet coefficient is far from zero, such rules with larger values of $a$ tend to underestimate $\theta$ than rules with small values of $a$.

\begin{figure}[H]
\centering
\subfigure[Squared bias\label{lognormal}]{
\includegraphics[scale=0.4]{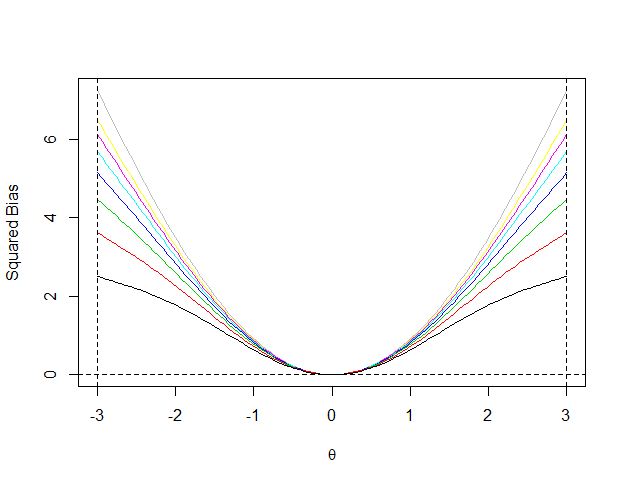}}
\subfigure[Classical Risks\label{blocls}]{
\includegraphics[scale=0.4]{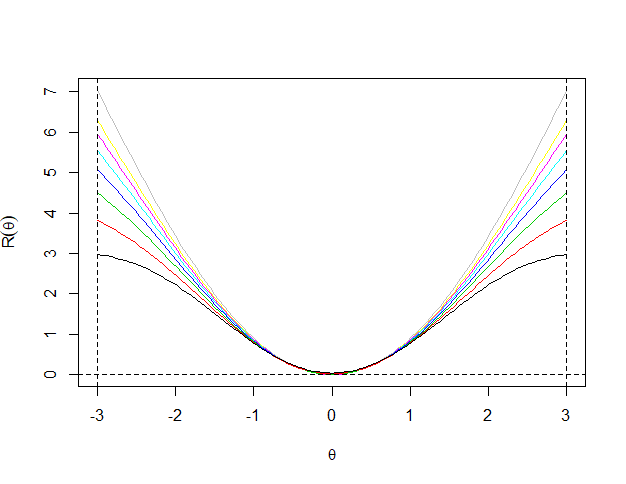}}
\caption{Squared bias and classical risks of the shrinkage rules under beta prior distribution with hyperparameters $m=3$, $\alpha = 0.9$ and values of $a \in [1,10]$.} \label{fig:bias}
\end{figure}

Finally, Tables \ref{tab:brisk} and \ref{tab:brisk2} show Bayes risks (denoted $r_{\delta}$) in terms of the hyperparameters $a$ and $\alpha$ respectively of the shrinkage rules considered. As expected, Bayes risk decreases as either $a$ or $\alpha$ increases.

\begin{table}[!htb]
\centering
\label{my-label}
\begin{tabular}{ccccccccc}
\hline
$a$ & 1   & 2     & 3  & 4 & 5 & 6 & 7 & 10     \\ \hline
$r_{\delta}$ & 0.189 & 0.137 & 0.101 & 0.088 & 0.074 & 0.063 & 0.056 & 0.041 \\ \hline
\end{tabular}
\caption{Bayes risks of the shrinkage rules under beta prior distribution with hyperparameters $m=3$ and $\alpha = 0.9$.}\label{tab:brisk}
\end{table}

\begin{table}[!htb]
\centering
\label{my-label}
\begin{tabular}{ccccccc}
\hline
$\alpha$ & 0.6   & 0.7 & 0.8 & 0.9 & 0.99     \\ \hline
$r_{\delta}$ & 0.399 & 0.326 & 0.241 & 0.137 & 0.017 \\ \hline
\end{tabular}
\caption{Bayes risks of the shrinkage rules under beta prior distribution with hyperparameters $m=3$ and $a=2$.}\label{tab:brisk2}
\end{table}
\section{An Extension: The Triangular Prior}

We briefly present the triangular prior distribution for the wavelet coefficients as an extension, since its associated shrinkage rule has explicit formula in terms of the standard normal density and cumulative functions. In fact, the triangular distribution, popularly referred as ``witch hat'', on $[- m, m]$ is the convolution of two uniform distributions on $[-m/2, m/2]$ and its density is given by
\begin{equation} \label{eq:triang}
g_{T}(x;m)=\left\{\begin{array}{rc}
\frac{x+m}{m^2},&\mbox{if}\quad -m \leq x < 0,\\
\frac{m-x}{m^2}, &\mbox{if}\quad 0 \leq x \leq m, \\
0, &\mbox{}\quad else.
\end{array}\right.
\end{equation}

The following proposition provides an explicit formula for the shrinkage rule under the triangular prior.

\begin{prop}\label{prop4}
The shrinkage rule under prior distribution of the form $\pi(\theta;\alpha,m) = \alpha \delta_{0}(\theta) + (1-\alpha)g_{T}(\theta;m)$, where $g_{T}(\cdot;m)$ is the triangular distribution over $[-m,m]$, is

\begin{equation}
\delta_T(d) = \frac{(1-\alpha)S_1(d)}{\frac{\alpha m^2}{\sigma}\phi(\frac{d}{\sigma}) + (1-\alpha)S_2(d)},
\end{equation}
where

\noindent $S_1(d) = d\sigma[\phi(\frac{m+d}{\sigma}) + \phi(\frac{m-d}{\sigma}) -2\phi(\frac{d}{\sigma})] + (d^2+\sigma^2+dm)\Phi(\frac{m+d}{\sigma}) +(d^2+\sigma^2- dm)\Phi(\frac{m-d}{\sigma})-2(d^2+\sigma^2)\Phi(\frac{d}{\sigma})$,

\noindent $S_2(d) = \sigma[\phi(\frac{m+d}{\sigma}) + \phi(\frac{m-d}{\sigma}) -2\phi(\frac{d}{\sigma})]+(d+m)\Phi(\frac{m+d}{\sigma})+(d-m)\Phi(\frac{m-d}{\sigma})-2d\Phi(\frac{d}{\sigma})$,

\noindent for $\phi(\cdot)$ and $\Phi(\cdot)$ the standard normal density and cumulative distribution respectively.

\end{prop}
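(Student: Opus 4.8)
The plan is to specialize Proposition \ref{prop1} to $g=g_T$. Reverting the change of variable used there, the Bayes rule under the triangular contaminated prior is
\[
\delta_T(d)=\frac{(1-\alpha)\int_{-m}^{m}\theta\,g_T(\theta;m)\,\tfrac1\sigma\phi\!\left(\tfrac{d-\theta}{\sigma}\right)d\theta}{\alpha\,\tfrac1\sigma\phi\!\left(\tfrac{d}{\sigma}\right)+(1-\alpha)\int_{-m}^{m}g_T(\theta;m)\,\tfrac1\sigma\phi\!\left(\tfrac{d-\theta}{\sigma}\right)d\theta}.
\]
Multiplying numerator and denominator by $m^{2}$ and setting
\[
S_1(d)=\frac{m^{2}}{\sigma}\int_{-m}^{m}\theta\,g_T(\theta;m)\,\phi\!\left(\tfrac{d-\theta}{\sigma}\right)d\theta,\qquad S_2(d)=\frac{m^{2}}{\sigma}\int_{-m}^{m}g_T(\theta;m)\,\phi\!\left(\tfrac{d-\theta}{\sigma}\right)d\theta,
\]
the numerator becomes $(1-\alpha)S_1(d)$, the point--mass term becomes $\tfrac{\alpha m^{2}}{\sigma}\phi(d/\sigma)$, and the statement is reduced to evaluating $S_1$ and $S_2$ in closed form.

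Since $m^{2}g_T(\theta;m)=\theta+m$ on $[-m,0)$ and $m^{2}g_T(\theta;m)=m-\theta$ on $[0,m]$, I would split both integrals at the kink $\theta=0$. This writes $\sigma S_2$ as a sum of integrals of the form $\int_p^q\phi\!\left(\tfrac{d-\theta}{\sigma}\right)d\theta$ and $\int_p^q\theta\,\phi\!\left(\tfrac{d-\theta}{\sigma}\right)d\theta$, and $\sigma S_1$ as a sum of those together with $\int_p^q\theta^{2}\,\phi\!\left(\tfrac{d-\theta}{\sigma}\right)d\theta$, for $(p,q)\in\{(-m,0),(0,m)\}$. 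Each such integral is elementary: the substitution $z=(\theta-d)/\sigma$, under which $\theta=d+\sigma z$ and $\phi\!\left(\tfrac{d-\theta}{\sigma}\right)=\phi(z)$, turns it into a polynomial in $z$ against $\phi(z)$, which is handled by the antiderivatives $\int\phi(z)\,dz=\Phi(z)$, $\int z\phi(z)\,dz=-\phi(z)$ and $\int z^{2}\phi(z)\,dz=\Phi(z)-z\phi(z)$, evaluated at the transformed endpoints $-\tfrac{m+d}{\sigma}$, $-\tfrac{d}{\sigma}$ and $\tfrac{m-d}{\sigma}$.

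What remains is bookkeeping. After expanding $(d+\sigma z)$ and $(d+\sigma z)^{2}$ inside the integrals, one collects the coefficients of $\Phi$ and of $\phi$ at each of the three endpoints; the contributions coming from the negative endpoints are rewritten using $\Phi(-x)=1-\Phi(x)$ and $\phi(-x)=\phi(x)$, and on adding the two halves $[-m,0]$ and $[0,m]$ the stray constants produced this way either cancel or recombine with a reflected $\Phi$-term, leaving exactly the asserted expressions for $S_1$ and $S_2$. I expect this recombination of signs and constants to be the only real difficulty: the computation is conceptually routine, but the reflected arguments must be tracked with care. A convenient final check is that $\theta\,g_T(\theta)\phi(\theta/\sigma)$ is odd, so $\delta_T(0)=0$, i.e. $S_1(0)=0$; this pins down the signs in the $\Phi$-terms with reflected argument. (Alternatively, one could exploit that the triangular density on $[-m,m]$ is the self-convolution of the uniform density on $[-m/2,m/2]$ and obtain $S_1,S_2$ by convolving the known uniform-prior marginals once more with the Gaussian, but the direct evaluation above is at least as short.)
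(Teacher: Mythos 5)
Your proposal follows exactly the paper's (one-line) proof of this proposition: specialize Proposition \ref{prop1} to $g=g_T$, split the integrals at the kink $\theta=0$, substitute $z=(\theta-d)/\sigma$, and evaluate the resulting Gaussian moment integrals with the standard antiderivatives — the outline is correct and complete. One remark worth recording: if you actually carry the bookkeeping through, the third term of $S_1$ comes out as $(d^2+\sigma^2-dm)\,\Phi\!\left(\frac{d-m}{\sigma}\right)$ and that of $S_2$ as $(d-m)\,\Phi\!\left(\frac{d-m}{\sigma}\right)$ (reflected argument), equivalently $(d^2+\sigma^2-dm)\bigl[1-\Phi\!\left(\frac{m-d}{\sigma}\right)\bigr]$ and $(d-m)\bigl[1-\Phi\!\left(\frac{m-d}{\sigma}\right)\bigr]$, rather than the expressions with $\Phi\!\left(\frac{m-d}{\sigma}\right)$ as printed; your own proposed sanity check $\delta_T(0)=0$, i.e.\ $S_1(0)=0$, detects exactly this, since the printed $S_1$ gives $S_1(0)=\sigma^2\bigl(2\Phi(m/\sigma)-1\bigr)\neq 0$, so do trust that check over the displayed formula when you finalize the signs.
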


\begin{proof}
Applying Proposition \ref{prop1} for the density \eqref{eq:triang} and solving the integrals directly. We provide the details in Appendix A.
\end{proof}

Shrinkage rules under triangular prior with $m=3$ and $\alpha \in \{0.6, 0.7, 0.8, 0.9, 0.99\}$ and their statistical properties are shown in Figure \ref{fig:triang} and Table \ref{tab:trisk}. The behaviors of the rules and the properties are the same as the rules under beta prior. The performances of the shrinkage rule under triangular prior in our simulation studies were great as we will see later and its explicit formula can bring advantages in computational implementation.

\begin{figure}[H]
\centering
\includegraphics[scale=0.70]{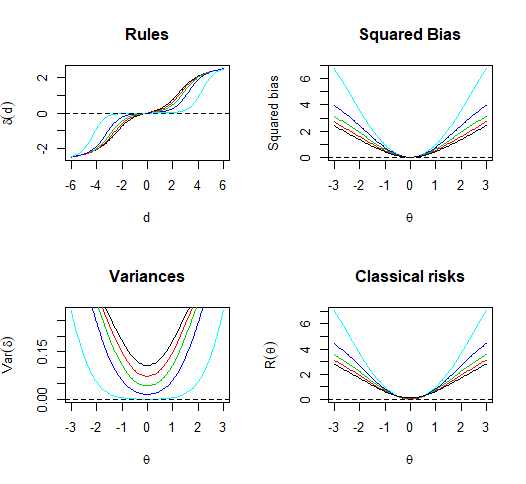}
\caption{Shrinkage rules (top left), squared bias (top right), variances (bottom left) and classical risks (bottom right) for triangular prior with $m=3$ and $\alpha \in \{0.6, 0.7, 0.8, 0.9, 0.99\}$.}\label{fig:triang}
\end{figure}

\begin{table}[!htb]
\centering
\label{my-label}
\begin{tabular}{ccccccc}
\hline
$\alpha$ & 0.6   & 0.7 & 0.8 & 0.9 & 0.99     \\ \hline
$r_{\delta}$ & 0.357 & 0.289 & 0.212 & 0.119 & 0.014 \\ \hline
\end{tabular}
\caption{Bayes risks of the shrinkage rules under triangular prior distribution with hyperparameters $m=3$ and $\alpha \in \{0.6, 0.7, 0.8, 0.9, 0.99\}$.}\label{tab:trisk}
\end{table}

\section{Default Prior Hyperparameters}

Methods and criteria for determination of the associated model hyperparameters to estimate the coefficients are critical for any Bayesian procedure. In the framework of Bayesian shrinkage with a beta prior, the choices of the $\sigma$ parameter of the random error distribution and the hyperparameters $\alpha$, $m$ and $a$ of the beta prior distribution are required. We present the methods and criteria adapted from proposals already available in the literature. The proposed metods for parameter or hyperparameter selection are used in simulation and application studies.

Based on the fact that much of the noise information present in the data can be obtained at the fine-resolutions in a wavelet decomposition, for the robust $\sigma$ estimation, Donoho and Johnstone (1994a) suggest

\begin{equation}\label{eq:sigma}
\hat{\sigma} = \frac{\mbox{median}\{|d_{J-1,k}|:k=0,...,2^{J-1}\}}{0.6745},
\end{equation}
where $J$ is the finest multiresolutuion level in the wavelet decomposition.
This choice is popular in all ``plug-in'' methods where th parameter $\sigma$ is estimated directly from the data.

The  hyperparameters $\alpha$ and $m$ are suggested as dependent on the level of resolution $j$ according to the expressions
\begin{equation}\label{eq:alpha}
\alpha = \alpha(j) = 1 - \frac{1}{(j-J_{0}+1)^\gamma}
\end{equation}
and
\begin{equation}\label{eq:m}
m = m(j) = \max_{k}\{|d_{jk}|\},
\end{equation}
where $J_ 0 \leq j \leq J-1$, $J_0$ is the coarsest resolution level and $\gamma > 0$. This was suggested by Angelini and Vidakovic (2004), who also indicate that in the absence of additional information, $\gamma = 2$ can be adopted as universal choice.

Many methods for choosing the hyperparameters in beta priors are available in the literature. Chaloner et al. (1983) proposed a method for choosing the hyperparameters based on the probability of success in Bernoulli trials. Duran and Booker (1988) used a percentile method. For $k \in [-m, m]$ and $p \in (0,1) $ fixed, $a$ is chosen so that
$P(\theta \leq k)=p$, i.e,
\begin{equation}\label{eq:a}
\int_{-m}^{k}\frac{(m^2 - \theta^2)^{(a-1)}}{(2m)^{(2a-1)}B(a,a)}d\theta = p. \nonumber
\end{equation}

Thus, the choice of $a$ is made by determining the probability of occurrence of the event $\{\theta \leq k \}$. This procedure is interesting because it uses subjective determination of probability, that is, it is  cognitively simpler to assign a probability to a certain event than to directly assign a value to the hyperparameter. In this work, however, we choose $a$ according to the desired shrinkage level of the empirical coefficients. As discussed in the paper, the shrinkage level increases as $a$ increases, since corresponding beta prior becomes more concentrated around zero.

Another possibility which adds to adaptivity is to consider the hyperparameter $a$ as level dependent, i.e, $a = a(j)$. However, we achieve adaptivity by a level-dependednt parameter $\alpha(j)$ which is a weight of point mass at zero in a contamination prior, and we resort to a fixed choice of $a$ in simulations and applications.

\section{Simulation Studies}

Simulation studies based on intensive computing were done to evaluate the performance of the shrinkage rules under beta priors for the particular cases in which the hyperparameter $a$ assumes the fixed values $a=1, 2, 5, and 10$ and triangular distribution (Triang). We compared these choices with the performances of some of the popular shrinkage/thresholding methods in the literature, namely, universal thresholding (Univ) proposed by Donoho and Johnstone (1994), false discovery rate (FDR) proposed by Abramovich and Benjamini (1996), cross validation (CV) of Nason (1996), Stein unbiased risk estimator (SURE) of Donoho and Johnstone (1995), Bayesian adaptive multiresolution shrinker (BAMS) of Vidakovic and Ruggeri (2001) and larger posterior mode (LPM) of Cutillo et al. (2008). We also considered the shrinkage rule under Bickel prior, suggested by Angelini and Vidakovic (2004). They proved that the shrinkage rule under this prior is approximately $\Gamma$-minimax for the class of all symmetric unimodal priors bounded on $[-m, m]$, $\Gamma_{SU[-m,m]}$ .
 Bickel (1981) proved that, when $m$ increases, the weak limit of the least favorable prior in $\Gamma_{SU[-m,m]}$ is approximately (in sense of weak distributional limits)
 $g_m(\theta)=\frac{1}{m}\cos^2\left(\frac{\pi \theta}{2m}\right)
\mathbb{I}_{[-m,m]}(\theta)$. Applying this result in our context, we
have that the Bickel shrinkage is induced by prior
\begin{equation*} \label{mlarge} \pi(\theta)=\alpha
\delta_0+(1-\alpha)\frac{1}{m}\cos^2\left(\frac{\pi \theta}{2m}\right)
\mathbb{I}_{[-m,m]}(\theta).
\end{equation*}
The corresponding Bayes rule does not have a simple analytical form and needs to be numerically computed.
The hyperparameters $m$ and $\alpha$ were selected according to the proposals described in Section 5.

To perform the simulation, the rules were applied in the Donoho-Johnstone (DJ) test functions. These functions, shown in Figure \ref{fig:dj},  are commonly used in the literature for comparison of wavelet-based shrinkage methods. The four functions, called Bumps, Blocks, Doppler and Heavisine, have specific features that mimic features occuring in practice: oscilations in Doppler, spikes in Bumps, discontinuities in Blocks, and cusps in Heavisine.
\begin{figure}[H]
\centering
\includegraphics[scale=0.50]{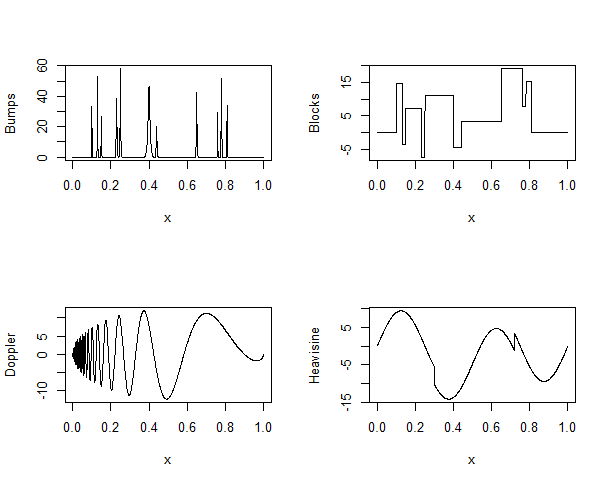}
\caption{Donoho-Johnstone (DJ) test functions.}\label{fig:dj}
\end{figure}

For each test function $f$, three sample sizes were selected, $n = 512, 1024$ and $2048$, and i.i.d. normals with zero mean and variance $\sigma^2$ were added, with $\sigma^2$ selected according to three signal to noise ratio (SNR), 3, 5 and 7. We used  Daubechies filter with ten vanishing moments (Daub10) to transform the noisy signals. After the shrinkage/thresholding procedure was applied, the processed coefficients are transformed back by inverse wavelet transform to the domain of the original signals.
This inverse transformed signal is the esstimator $\hat f$ of the original test function $f$.

We used the mean squared error (MSE), $MSE = \frac{1}{n} \sum_{i=1}^{n}[{\hat f(x_i)} - f(x_i)]^2$ as to compare the shrinkage rules. For each function (each $n$ and SNR), the process was repeated $M = 200$ times and the average of the resulting MSEs, $AMSE = \frac{1}{M} \sum_{j=1}^{M}MSE_j$, was calculated, as shown in Tables \ref{tab:sim1} and \ref{tab:sim2}. Figure \ref{fig:fits} presents the estimates produced by the shrinkage rule under beta prior with $a=2$ and $n=2048$.

In general, the beta prior and triangular shrinkage rules had superior performances in the simulations. They dominated Univ, FDR, CV, SURE, BAMS, LPM and Bickel prior shrinkage in practically all the scenarios. We highlight the good results of the beta prior rules for lower SNR (SNR=3) and sparsly sampled signal ($n=512$) when compared with the other methods, which is good motivation to use this shrinkage in denoising tasks with real data. In Tables \ref{tab:sim1} and \ref{tab:sim2} the best results are in bold. Another point to be emphasized is solid performance of the beta and triangular shrinkage rules even in the scenarios where they were not dominant. The overall simulation results indicate good flexibility of the proposed shrinkage rules in estimation of functions with different characteristics.

It is worth mentioning, that the AMSE decreased for $a=1,2,$ and $5$ and then increased for $a=10$ in most of the scenarios. As the shrinkage level increases when $a$ increases, for large values of $a$ the empirical coefficients are excessively shrunk. This results in oversmoothing of important features, such spikes, cusps or discontinuities, negatively affecting AMSE performance.

\begin{table}[H]
\scalefont{0.5}
\centering
\label{my-label}
\begin{tabular}{|c|c|c|c|c|c|||c|c|c|c|c|c|}
\hline

Signal & n & Method & SNR=3 & SNR=5 & SNR=7&Signal&n&Method&SNR=3&SNR=5&SNR=7  \\ \hline \hline

Bumps&512	&	Univ	&	11.080	&	5.170	&	3.026	& Blocks	&512	&	Univ	&	6.928	&	3.660	&	2.254	\\
&	&	FDR	&	9.291	&	4.373	&	2.630	&	&	&	FDR	&	5.896	&	2.903	&	1.746	\\
&	&	CV	&	11.389	&	9.406	&	6.292	&	&	&	CV	&	2.559	&	1.250	&	0.841	\\
&	&	SURE	&	3.609	&	1.556	&	0.882	&	&	&	SURE	&	2.766	&	1.216	&	0.679	\\
&	&	BAMS	&	2.867	&	1.528	&	1.265	&	&	&	BAMS	&	2.465	&	1.297	&	1.091	\\
&	&	LPM	&	4.892	&	1.960	&	1.000	&	&	&	LPM	&	4.892	&	1.960	&	1.000	\\
&	&	Bickel	&	2.814	&	1.156	&\textbf{0.654}	&	&	&	Bickel	&	2.748	&	1.191	&	1.590	\\
&	&	$a=1$	&	2.995	&	1.238	&	0.696	&	&	&	$a=1$	&	2.915	&	1.315	&	0.684	\\
&	&	$a=2$	&	2.874	&	1.182	&	0.674	&	&	&	$a=2$	&	2.799	&	1.311	&	0.799	\\
&	&	$a=5$	& \textbf{2.812}	&\textbf{1.144}	&	0.661	&	&	&	$a=5$	&\textbf{2.687}	&	\textbf{1.181}	&\textbf{0.617}	\\
&	&	$a=10$	&	2.936	&	1.186	&	0.720	&	&	&	$a=10$	&	2.972	&	1.445	&	0.878	\\
&	&	Triang	&	2.828	&	1.157	&	0.656	&	&	&	Triang	&	2.727	&	1.249	&	0.786	\\ \hline
																			
&1024	&	Univ	&	7.547	&	3.570	&	2.128	&	&1024	&	Univ	&	4.848	&	2.479	&	1.542	\\
&	&	FDR	&	5.556	&	2.524	&	1.473	&	&	&	FDR	&	3.896	&	1.874	&	1.125	\\
&	&	CV	&	2.924	&	1.925	&	1.719	&	&	&	CV	&	1.789	&	0.838	&	0.533	\\
&	&	SURE	&	2.467	&	1.057	&	0.590	&	&	&	SURE	&	1.888	&	0.837	&\textbf{0.474}	\\
&	&	BAMS	&	2.155	&	1.046	&	0.860	&	&	&	BAMS	&	1.856	&	0.842	&	0.686	\\
&	&	LPM	&	4.966	&	1.957	&	0.998	&	&	&	LPM	&	4.966	&	1.957	&	0.998	\\
&	&	Bickel	&	1.972	&	0.978	&\textbf{0.466}	&	&	&	Bickel	&	1.718	&	1.788	&	0.523	\\
&	&	$a=1$	&	2.099	&	0.958	&	0.506	&	&	&	$a=1$	&	1.852	&	0.821	&	0.562	\\
&	&	$a=2$	&	2.014	&	1.016	&	0.486	&	&	&	$a=2$	&	1.770	&	0.933	&	0.536	\\
&	&	$a=5$	&\textbf{1.949}	&\textbf{0.877}	&	0.546	&	&	&	$a=5$	&\textbf{1.713}	&\textbf{0.761}	&	0.514	\\
&	&	$a=10$	&	1.949	&	1.059	&	0.821	&	&	&	$a=10$	&	1.900	&	0.786	&	0.496	\\
&	&	Triang	&	1.963	&	0.902	&	0.476	&	&	&	Triang	&	1.728	&	0.907	&	0.508	\\ \hline
																			
&2048	&	Univ	&	5.042	&	2.343	&	1.389	&	&2048	&	Univ	&	3.417	&	1.772	&	1.101	\\
&	&	FDR	&	3.567	&	1.581	&	0.915	&	&	&	FDR	&	2.676	&	1.288	&	0.764	\\
&	&	CV	&	1.602	&	0.734	&	0.477	&	&	&	CV	&	1.301	&	0.588	&	0.353	\\
&	&	SURE	&	1.647	&	0.696	&	0.389	&	&	&	SURE	&	1.356	&	0.596	&\textbf{0.341}	\\
&	&	BAMS	&	1.635	&	0.667	&	0.527	&	&	&	BAMS	&	1.502	&	0.585	&0.459	\\
&	&	LPM	&	4.955	&	1.957	&	0.998	&	&	&	LPM	&	4.955	&	1.957	&	0.998	\\
&	&	Bickel	&	1.208	&\textbf{0.510}	&\textbf{0.320}	&	&	&	Bickel	&	1.566	&	0.602	&	1.393	\\
&	&	$a=1$	&	1.267	&	0.571	&	0.333	&	&	&	$a=1$	&	1.252	&	0.651	&	1.857	\\
&	&	$a=2$	&	1.227	&	0.530	&	0.337	&	&	&	$a=2$	&	1.430	&	0.608	&	1.592	\\
&	&	$a=5$	&	1.209	&	0.742	&	1.113	&	&	&	$a=5$	&\textbf{1.163}	&\textbf{0.573}	&	1.306	\\
&	&	$a=10$	&	1.311	&	0.580	&	0.339	&	&	&	$a=10$	&	1.480	&	0.723	&	1.366	\\
&	&	Triang	&\textbf{1.205}	&	0.518	&	0.354	&	&	&	Triang	&	1.381	&	0.582	&	1.457	\\ \hline

\end{tabular}
\caption{AMSE of the shrinkage/thresholding rules in the simulation study for Bumps and Blocks DJ test functions.}\label{tab:sim1}
\end{table}

\begin{table}[H]
\scalefont{0.5}
\centering
\label{my-label}
\begin{tabular}{|c|c|c|c|c|c|||c|c|c|c|c|c|}
\hline

Signal & n & Method & SNR=3 & SNR=5 & SNR=7&Signal&n&Method&SNR=3&SNR=5&SNR=7  \\ \hline \hline
Doppler&512	&	Univ	&	2.680	&	1.413	&	0.892	&Heavisine	&512	&	Univ	&	0.567	&	0.404	&	0.304	\\
&	&	FDR	&	2.565	&	1.259	&	0.767	&	&	&	FDR	&	0.595	&	0.436	&	0.312	\\
&	&	CV	&	1.293	&	0.647	&	0.451	&	&	&	CV	&\textbf{0.505}	&\textbf{0.279}	&\textbf{0.178}	\\
&	&	SURE	&	1.329	&	0.596	&	0.337	&	&	&	SURE	&	0.571	&	0.414	&	0.317	\\
&	&	BAMS	&	1.551	&	0.628	&	0.503	&	&	&	BAMS	&	1.153	&	0.327	&	0.233	\\
&	&	LPM	&	4.892	&	1.960	&	1.000	&	&	&	LPM	&	4.892	&	1.960	&	1.000	\\
&	&	Bickel	&	1.112	&\textbf{0.520}	&\textbf{0.297}	&	&	&	Bickel	&	0.896	&	0.631	&	0.464	\\
&	&	$a=1$	&	1.138	&	0.567	&	0.305	&	&	&	$a=1$	&	0.788	&	0.548	&	0.470	\\
&	&	$a=2$	&	1.117	&	0.542	&	0.303	&	&	&	$a=2$	&	0.832	&	0.578	&	0.447	\\
&	&	$a=5$	&	1.130	&	0.522	&	0.298	&	&	&	$a=5$	&	0.976	&	0.648	&	0.721	\\
&	&	$a=10$	&	1.274	&	1.551	&	2.113	&	&	&	$a=10$	&	1.230	&	0.718	&	0.495	\\
&	&	Triang	&\textbf{1.104}	&	0.525	&	0.311	&	&	&	Triang	&	0.837	&	0.561	&	0.444	\\ \hline
																			
&1024	&	Univ	&	1.612	&	0.846	&	0.534	&	&1024	&	Univ	&	0.460	&	0.314	&	0.231	\\
&	&	FDR	&	1.508	&	0.747	&	0.455	&	&	&	FDR	&	0.506	&	0.326	&	0.225	\\
&	&	CV	&	0.803	&	0.367	&	0.218	&	&	&	CV	&\textbf{0.369}	&\textbf{0.202}	&\textbf{0.127}	\\
&	&	SURE	&	0.836	&	0.383	&	0.225	&	&	&	SURE	&	0.463	&	0.321	&	0.238	\\
&	&	BAMS	&	1.254	&	0.409	&	0.308	&	&	&	BAMS	&	1.055	&	0.261	&	0.177	\\
&	&	LPM	&	4.966	&	1.957	&	0.998	&	&	&	LPM	&	4.966	&	1.957	&	0.998	\\
&	&	Bickel	&	0.689	&\textbf{0.290}	&	0.686	&	&	&	Bickel	&	0.657	&	0.454	&	0.518	\\
&	&	$a=1$	&	0.707	&	0.306	&	0.192	&	&	&	$a=1$	&	0.593	&	0.429	&	0.361	\\
&	&	$a=2$	&	0.680	&	0.322	&	0.255	&	&	&	$a=2$	&	0.617	&	0.449	&	0.409	\\
&	&	$a=5$	&	0.684	&	0.301	&\textbf{0.183}	&	&	&	$a=5$	&	0.733	&	0.585	&	0.629	\\
&	&	$a=10$	&	1.348	&	1.297	&	0.744	&	&	&	$a=10$	&	0.851	&	0.487	&	0.398	\\
&	&	Triang	&\textbf{0.677}	&	0.340	&	0.253	&	&	&	Triang	&	0.624	&	0.425	&	0.425	\\ \hline
																			
&2048	&	Univ	&	1.146	&	0.578	&	0.364	&	&2048	&	Univ	&	0.363	&	0.233	&	0.165	\\
&	&	FDR	&	1.038	&	0.487	&	0.295	&	&	&	FDR	&	0.391	&	0.232	&	0.155	\\
&	&	CV	&	0.551	&	0.252	&	0.146	&	&	&	CV	&\textbf{0.265}	&\textbf{0.141}	&\textbf{0.088}	\\
&	&	SURE	&	0.568	&	0.258	&	0.148	&	&	&	SURE	&	0.365	&	0.236	&	0.168	\\
&	&	BAMS	&	1.085	&	0.275	&	0.184	&	&	&	BAMS	&	0.982	&	0.200	&	0.120	\\
&	&	LPM	&	4.955	&	1.957	&	0.998	&	&	&	LPM	&	4.955	&	1.957	&	0.998	\\
&	&	Bickel	&	0.430	&	0.713	&	0.168	&	&	&	Bickel	&	0.501	&	0.549	&	1.323	\\
&	&	$a=1$	&	0.413	&	0.201	&	0.154	&	&	&	$a=1$	&	0.466	&	0.346	&	0.306	\\
&	&	$a=2$	& \textbf{0.405}	&	0.254	&	0.148	&	&	&	$a=2$	&	0.475	&	0.426	&	0.325	\\
&	&	$a=5$	&	0.423	& \textbf{0.192}	&	0.146	&	&	&	$a=5$	&	0.606	&	0.412	&	0.320	\\
&	&	$a=10$	&	1.681	&	2.523	&	2.509	&	&	&	$a=10$	&	0.613	&	0.385	&	0.315	\\
&	&	Triang	&	0.408	&	0.275	&\textbf{0.145}	&	&	&	Triang	&	0.480	&	0.405	&	0.329	\\ \hline

\end{tabular}
\caption{AMSE of the shrinkage/thresholding rules in the simulation study for Doppler and Heavisine DJ test functions.}\label{tab:sim2}
\end{table}

\begin{figure}[H]
\subfigure[SNR=3.\label{lognormal}]{
\includegraphics[scale=0.5]{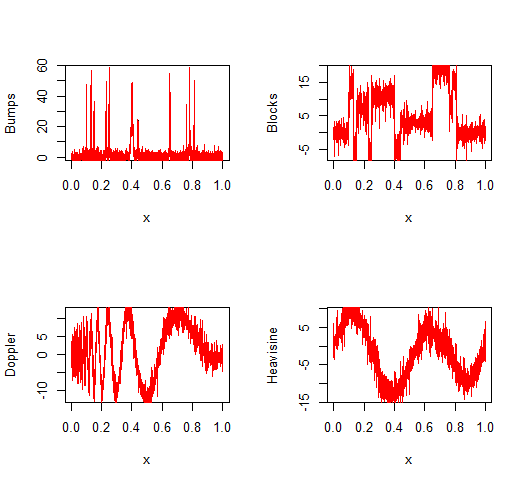}}
\subfigure[SNR=5.\label{blocls}]{
\includegraphics[scale=0.5]{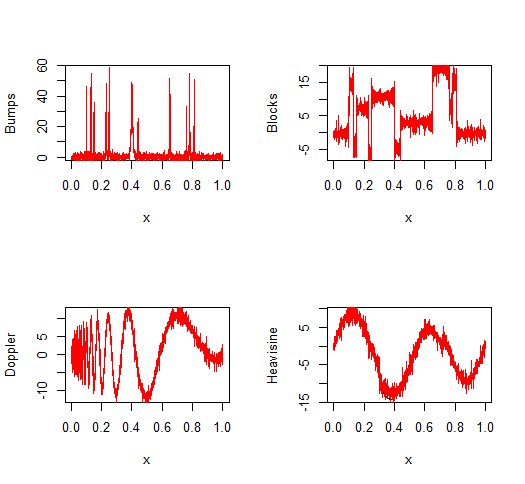}}
\subfigure[SNR=7.\label{blocls}]{
\includegraphics[scale=0.5]{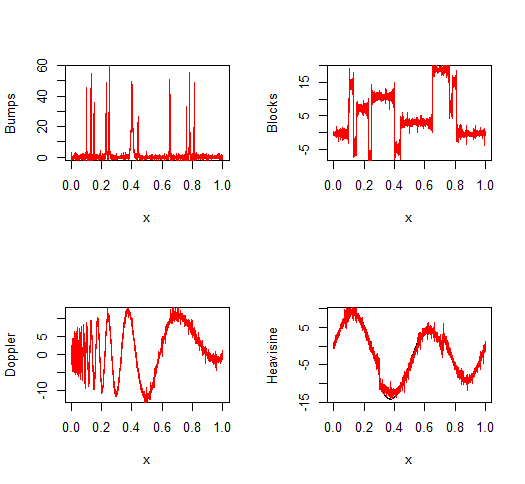}}
\caption{Donoho-Johnstone test functions: Estimates obtained by wavelet shrinkage rule under beta prior with $a=2$ and $n=2048$.}\label{fig:fits}
\end{figure}

\section{Application in Spike-Sorting Data Set}

Spike sorting is a classification procedure of action potentials (spikes) emitted by neurons according to their different forms and amplitudes. Typically, action potentials data for sorting by spike sorting are collected extracellularly by means of electrodes connected at certain locations in the head of animals. It is a method of extreme relevance in Neuroscience due to the possibility of studies about which neurons are present in certain regions of the brain and how they interact.

Once the raw data of action potentials is collected, the first step of the spike sorting procedure is to denoise data to facilitate visualization of spikes and prevent misclassification of a noise as spike. Among several methods used for spike sorting data noise reduction, wavelet based methods are one of the most used. For more details on spike sorting and statistical methods involved in the analysis of characteristic data, one has Pouzat et al. (2002), Lewicki (1998), Shoham et al. (2003), Einevoll et al. (2012) among others. Applications of wavelets in spike sorting occur in the works of Letelier and Webber (2000), Quiroga et al. (2004) and Shalchyan et al. (2012). The purpose here use the beta and triangular shrinkage rules in the wavelet domain for noise reduction.

The original data set, presented in Figure \ref{fig:pot},  has 20,000 neuronal action potentials (spikes) observed over time.  The data set was collected by Kenneth Harris, of \textit{Institute of Neurology, Faculty of Brain Sciences, University College London}, and it is available at https://ifcs.boku.ac.at/repository/data/spike-sorting/index.html.
For the wavelet application, we redice the sample size to  $n=2^{14}=16384$.

\begin{figure}[H]
\centering
\includegraphics[scale=0.60]{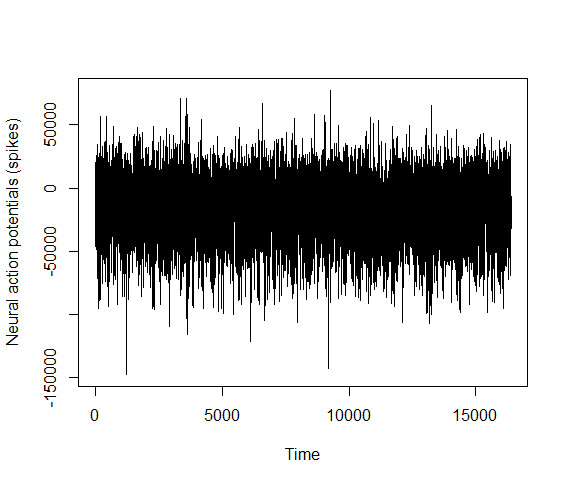}
\caption{Neural action potentials (\textit{spikes}).}\label{fig:pot}
\end{figure}

The shrinkage rule with beta and triangular priors were applied in the wavelet domain. The hyperparameters chosen for $\sigma$, $m$ e $\alpha$ were given according to Section 5, with $\hat{\sigma} = 19913$ and $a=2$ for the beta distribution. Figure \ref{fig:estpot} presents the estimated functions and Figure \ref{fig:potcoef} shows the empirical wavelet coefficients after application of transform with Daubechies filter with ten vanishing moments, $N = 10.$

\begin{figure}[H]
\centering
\subfigure[Estimated action potentials - beta prior shrinkage rule with $a=2$.\label{lognormal}]{
\includegraphics[scale=0.4]{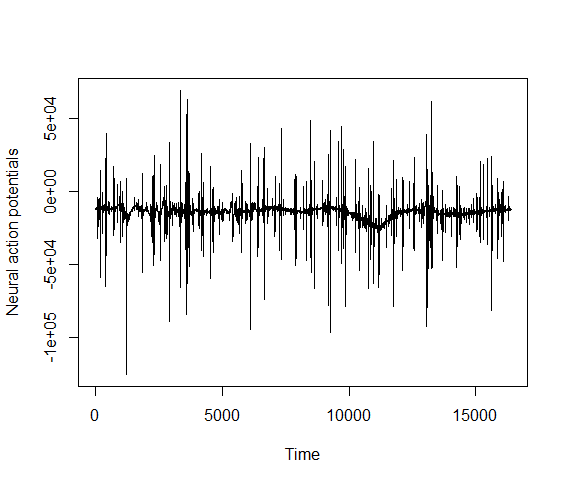}}
\subfigure[Estimated action potentials - triangular prior shrinkage rule.\label{blocls}]{
\includegraphics[scale=0.4]{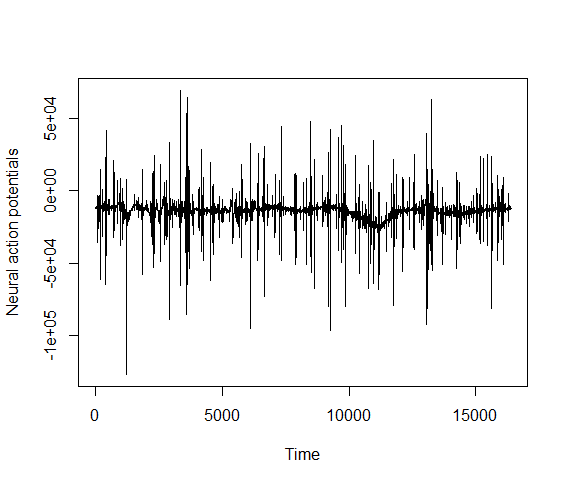}}
\caption{Estimated action potentials - beta prior shrinkage rule with $a=2$ (a) and triangular prior shrinkage rule (b).}\label{fig:estpot}
\end{figure}

\begin{figure}[H]
\centering
\subfigure[Empirical coefficients.\label{lognormal}]{
\includegraphics[scale=0.4]{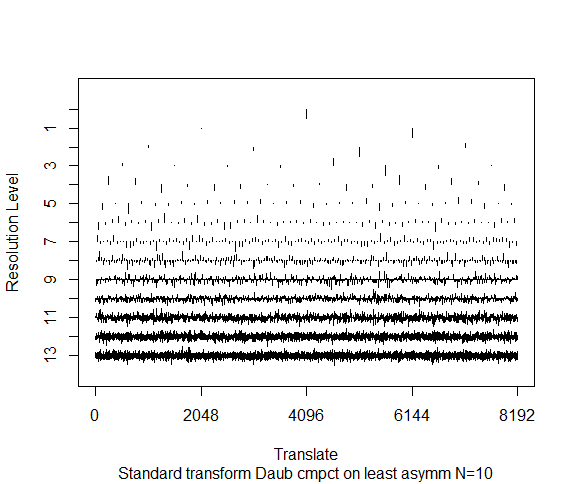}}
\subfigure[Estimated coefficients - beta prior shrinkage rule with $a=2$.\label{blocls}]{
\includegraphics[scale=0.4]{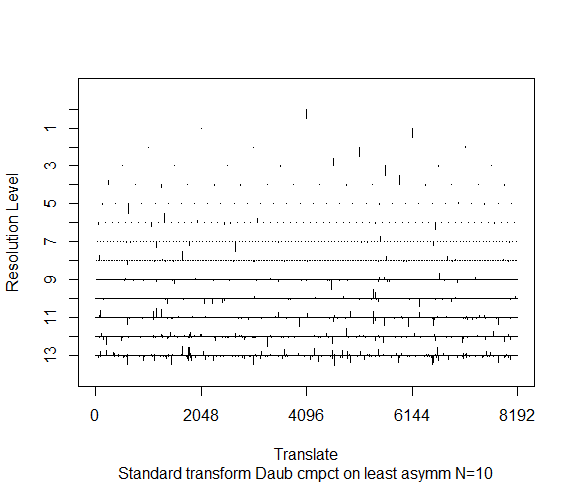}}
\caption{Empirical coefficients (a) and estimated coefficients - beta prior shrinkage rule with $a=2$ (b) of the Spike-Sorting data set.}\label{fig:potcoef}
\end{figure}

\section{Conclusions}

In this paper we proposed the use of a discrete mixture of a point mass at zero and the beta distribution as a prior model to wavelet coefficients. This model allows for the incorporation of prior information about the coefficients for bounded energy signals, which is an advantage in comparison with standard shrinkage/thresholding techniques. Furthermore, the hyperparameters are readily related to the shrinkage level of the associated shrinkage rule, a useful feature which  facilitates their elicitation. 

The results indicate that the shrinkage rules associated with such prior model perform better compared to some standard shrinkage/thresholding techniques for majority of cases and test functions in terms of average mean squared error. The proposed rules are particularly useful for noisy signals with a low signal-to-noise ratio.
Because of this performance and fairly easy computation, we recommend to practitioners the beta and triangular wavelet shrinkage, especially in the cases when the noise dominates the signal.

Further extensions, generalization and new results are planned. The performance of the shrinkage rules on statistical models with distributions with possibly asymmetric support could be considered. The impact of using different wavelet bases in such rules may also be of interest and were not considered here. As an improvement   of the proposed technique, the use of level-dependent hyperparameters in simulation studies and comparisons against the state-of-art techniques, especially for a low SNR may be of interest.

Thus, the proposed model provides a basis on which more sophisticated generalizations can be made. For example, univariate betas can be replaced with their multivariate versions (several such generalizations exist), while hyperparameters can be connected via hyperpriors to address the inter-dependences of empirical wavelet coefficients.

\appendix
\section{Proof of Proposition 4.1}

The proof of Proposition 4.1 requires the following lemma, involving truncated moments of the standard normal distribution.

\begin{lema}\label{lema}
Let $\phi(\cdot)$ and $\Phi(\cdot)$ be the standard normal density and cumulative distribution functions respectively and  $a,b \in \mathbb{R}$. Then,

\begin{enumerate}

\item $\int_{a}^{b}x\phi(x)dx = \phi(a)-\phi(b),$

\item $\int_{a}^{b}x^2\phi(x)dx = [a\phi(a)-b\phi(b)]+[\Phi(b)-\Phi(a)],$

\end{enumerate}
\end{lema}

We now provide the proof of Proposition 4.1.
\begin{proof}
Applying Proposition \ref{prop1} we get
\begin{align*}
\delta_{T}(d) &= \frac{(1-\alpha)\int_{\frac{-m-d}{\sigma}}^{\frac{m-d}{\sigma}}(\sigma u + d)g_{T}(\sigma u + d)\phi(u)du}{\alpha \frac{1}{\sigma}\phi(\frac{d}{\sigma})+(1-\alpha)\int_{\frac{-m-d}{\sigma}}^{\frac{m-d}{\sigma}}g_{T}(\sigma u + d)\phi(u)du} \\
              &= \frac{(1-\alpha)I_2}{\alpha \frac{1}{\sigma}\phi(\frac{d}{\sigma})+(1-\alpha)I_1}.
\end{align*}
For $I_1$, the integral in the denominator, we have that
\begin{align*}
I_1 &= \int_{\frac{-m-d}{\sigma}}^{\frac{m-d}{\sigma}}g_{T}(\sigma u + d)\phi(u)du \\
    &= \int_{\frac{-m-d}{\sigma}}^{\frac{-d}{\sigma}}\frac{[(\sigma u +d)+m]}{m^2}\phi(u)du + \int_{\frac{-d}{\sigma}}^{\frac{m-d}{\sigma}}\frac{[m-(\sigma u +d)]}{m^2}\phi(u)du \\
    &= I_{1}^{*} + I_{1}^{**}.
\end{align*}
We calculate $I_{1}^{*}$ and $I_{1}^{**}$ separately and apply Lemma A.1 to solve the definite integrals, i.e,
\begin{align*}
I_{1}^{*} &= \int_{\frac{-m-d}{\sigma}}^{\frac{-d}{\sigma}}\frac{[(\sigma u +d)+m]}{m^2}\phi(u)du \\
          &= \frac{1}{m^2}\bigg\{\sigma \left[\phi\left(\frac{m+d}{\sigma}\right)-\phi\left(\frac{d}{\sigma}\right)\right] + (d+m)\left[\Phi\left(\frac{-d}{\sigma}\right)-\Phi\left(\frac{-m-d}{\sigma}\right)\right]\bigg\}.
\end{align*}

\begin{align*}
I_{1}^{**} &= \int_{\frac{-d}{\sigma}}^{\frac{m-d}{\sigma}}\frac{[m-(\sigma u +d)]}{m^2}\phi(u)du \\
           &= \frac{1}{m^2}\bigg\{(m-d) \left[\Phi\left(\frac{m-d}{\sigma}\right)-\Phi\left(\frac{-d}{\sigma}\right)\right] - \sigma \left[\phi\left(\frac{d}{\sigma}\right)-\phi\left(\frac{m-d}{\sigma}\right)\right]\bigg\}.
\end{align*}
For $I_2$, the integral in the numerator,
\begin{align*}
I_2 = \sigma \int_{\frac{-m-d}{\sigma}}^{\frac{m-d}{\sigma}}ug_{T}(\sigma u + d)\phi(u)du + d\int_{\frac{-m-d}{\sigma}}^{\frac{m-d}{\sigma}}g_{T}(\sigma u + d)\phi(u)du = \sigma I_{2}^{*} + dI_1.
\end{align*}
Then $I_2$ depends on $I_1$ and $I_2^{*}$. Let is obtain $I_2^{*}$.
\begin{align*}
I_{2}^{*} &= \int_{\frac{-m-d}{\sigma}}^{\frac{-d}{\sigma}}u\frac{[(\sigma u +d)+m]}{m^2}\phi(u)du + \int_{\frac{-d}{\sigma}}^{\frac{m-d}{\sigma}}u\frac{[m-(\sigma u +d)]}{m^2}\phi(u)du \\
          & = \frac{\sigma}{m^2}\left[2\Phi\left(\frac{-d}{\sigma}\right)-\Phi\left(\frac{-m-d}{\sigma}\right)-\Phi\left(\frac{m-d}{\sigma}\right)\right].
\end{align*}
In this way, we obtain $I_2$ as
\begin{align*}
I_2 &= \frac{1}{m^2}\bigg\{d\sigma\left[\phi\left(\frac{m+d}{\sigma}\right)+\phi\left(\frac{md}{\sigma}\right)-2\phi\left(\frac{d}{\sigma}\right)\right]+(\sigma^2+dm+d^2)\Phi\left(\frac{m+d}{\sigma}\right)+ \\
    & \quad +(\sigma^2-dm+d^2)\Phi\left(\frac{d-m}{\sigma}\right)-2(\sigma^2+d^2)\Phi\left(\frac{d}{\sigma}\right)\bigg\}.
\end{align*}
Finally, substituting the integrals $I_1$ and $I_2$ into the original expression of $\delta_T(d)$ and for convenient definitions of $S_1(d)$ and $S_2(d)$, we have
\begin{equation}
\delta_T(d) = \frac{(1-\alpha)I_2}{\alpha \frac{1}{\sigma}\phi(\frac{d}{\sigma})+(1-\alpha)I_1}=\frac{(1-\alpha)S_1(d)}{\frac{\alpha m^2}{\sigma}\phi(\frac{d}{\sigma}) + (1-\alpha)S_2(d)}. \nonumber
\end{equation}

\end{proof}

\end{document}